\title{Conversion from RLBWT to LZ77}
\titlerunning{Conversion from RLBWT to LZ77}
\author{Takaaki Nishimoto}{RIKEN Center for Advanced Intelligence Project, Japan}{takaaki.nishimoto@riken.jp}{}{}
\author{Yasuo Tabei}{RIKEN Center for Advanced Intelligence Project, Japan}{yasuo.tabei@riken.jp}{}{}
\authorrunning{T. Nishimoto and Y. Tabei}
\subjclass{Theory of computation $\rightarrow$ Design and analysis of algorithms $\rightarrow$ Data structures design and analysis}
\keywords{Burrows-Wheeler Transform, Lempel-Ziv Parsing, Lossless Data Compression}
\begin{document}

\maketitle

\begin{abstract}
Converting a compressed format of a string into another compressed format without an explicit decompression is one of the central research topics in string processing. 
We discuss the problem of converting the run-length Burrows-Wheeler Transform~(RLBWT) of a string to Lempel-Ziv 77~(LZ77) phrases of the reversed string.
The first results with Policriti and Prezza’s conversion algorithm~[Algorithmica 2018] 
were $O(n \log r)$ time and $O(r)$ working space 
for length of the string $n$, number of runs $r$ in the RLBWT, and number of LZ77 phrases $z$.
Recent results with Kempa’s conversion algorithm~[SODA 2019] are $O(n / \log n + r \log^{9} n + z \log^{9} n)$ time 
and $O(n / \log_{\sigma} n + r \log^{8} n)$ working space for the alphabet size $\sigma$ of the RLBWT.
In this paper, we present a new conversion algorithm by improving Policriti and Prezza's conversion algorithm where dynamic data structures for general purpose are used. 
We argue that these dynamic data structures 
can be replaced and present new data structures for faster conversion. 
The time and working space of our conversion algorithm 
with new data structures are 
$O(n \min \{ \log \log n, \sqrt{\frac{\log r}{\log\log r}} \})$ and $O(r)$, respectively. 

\end{abstract}
\section{Introduction}\label{intro}
Converting a compressed format of a string into another compressed format without an explicit decompression is one of the central research topics in string processing.
Examples are conversions from the Lempel-Ziv 77~(LZ77) Phrases of a string into a grammar-based encoding~\cite{DBLP:journals/tcs/Jez16,DBLP:journals/tcs/Rytter03}, from 
a grammar-based encoding of a string into LZ78 phrases~\cite{DBLP:conf/spire/BannaiIT12,DBLP:conf/cpm/BannaiGIT13} and 
from a grammar-based encoding of a string into another grammar-based encoding~\cite{DBLP:journals/corr/abs-1811-01472}.
Such conversion is beneficial when one intends to process a compressed string to a different compressed format without decompressing it. 

{\em LZ77 parsing}, proposed in 1976~\cite{LZ76}, is one of the most popular lossless data compression algorithms and is 
a greedy partition of a string such that each phrase is a previous occurrence of a substring or a character not occurring previously in the string.
The {\em run-length Burrows-Wheeler transform (RLBWT)}~\cite{burrows1994block} is a recent popular lossless data compression algorithm 
with a run-length encoded permutation of a string. 

Policriti and Prezza~\cite{DBLP:journals/algorithmica/PolicritiP18} proposed the first conversion algorithm from the RLBWT 
of an input string to the LZ77 phrases of the reversed string.
The basic idea with this algorithm is to carry out backward searches on the RLBWT and find a previous occurrence of each phrase 
using red-black trees storing a sampled suffix array and dynamic data structure for solving 
\emph{the searchable partial sums with the indels problem}~(e.g., \cite{DBLP:journals/algorithmica/BilleCCGSVV18,DBLP:journals/tcs/HonSS11}). 
Since these data structures are updated frequently for scanning the string in the RLBWT format, 
the running time and working space are $O(n \log r)$ and $O(r)$ words, respectively, for string length $n$ and number of runs $r$ (i.e., the number of continuous occurrences of the same characters). 

Kempa~\cite{DBLP:conf/soda/Kempa19} recently presented a conversion algorithm from the RLBWT to LZ77, which runs in $O(n / \log n + r \log^{9} n + z \log^{9} n)$ time and 
$O(n / \log_{\sigma} n + r \log^{8} n)$ working space for the number $z$ of LZ77 phrases 
and the alphabet size $\sigma$ of the string in the RLBWT format. 
While the algorithm runs in $o(n)$ time and working space, especially when $r$ and $z$ are small~(e.g., $r, z = O(n / \log^{9} n)$), 
the working space of the algorithm is larger than that of Policriti and Prezza's algorithm in many cases. 

In this paper, we present a new conversion algorithm from the RLBWT to LZ77 by 
improving Policriti and Prezza's algorithm. 
Their algorithm adopts dynamic data structures for four queries consisting of backward search, LF function, access queries on the RLBWT, 
and so-called {\em range more than query (RMTQ)}.
We argue that these dynamic data structures can be replaced for answering those queries and present new data structures for faster conversion. 
Our algorithm runs in $O(n \min \{ \log \log n, \sqrt{\frac{\log r}{\log\log r}} \})$ time 
and $O(r)$ working space, which is faster than their algorithm while using the same working space (see Table~\ref{table:result} for a summary of conversion algorithms). 
 
\begin{table}[t]
    \vspace{-0.5cm}
    \caption{Summary of conversion algorithms from the RLBWT to LZ77.}
    \label{table:result} 
    \center{	

    \begin{tabular}{r||c|c}
 Algorithm & Conversion time & Working space~(words) \\ \hline
Policriti and Prezza~(Thm. 7)~\cite{DBLP:journals/algorithmica/PolicritiP18} & $O(n \log r)$ & $O(r)$ \\ \hline
Kempa~(Thm. 7.3)~\cite{DBLP:conf/soda/Kempa19} & $O(n / \log n + r \log^{9} n + z \log^{9} n)$ & $O(n / \log_{\sigma} n + r \log^{8} n)$ \\ \hline\hline
This study & $O(n \min \{ \log \log n, \sqrt{\frac{\log r}{\log\log r}} \})$ & $O(r)$ \\ 
    \end{tabular} 
    }
\end{table}

\newcommand{\Occ}{\mathit{Occ}}

\newcommand{\floor}[1]{\left \lfloor #1 \right \rfloor}
\newcommand{\ceil}[1]{\left \lceil #1 \right \rceil}
\newcommand{\LZ}{\mathsf{LZ}}

\newcommand{\SE}[2]{\mathit{SE}_{#1}^{#2}}
\newcommand{\encblock}{\mathit{LC}}
\newcommand{\encpow}[1]{\mathit{RLF}(#1)}
\newcommand{\assign}{\mathit{Assgn}}
\newcommand{\argmax}{\mathop{\rm arg~max}\limits}
\newcommand{\argmin}{\mathop{\rm arg~min}\limits}

\newcommand{\rmq}{\mathsf{RMQ}}
\newcommand{\SA}{\mathsf{SA}}
\newcommand{\ISA}{\mathsf{ISA}}
\newcommand{\LCE}{\mathsf{LCE}}
\newcommand{\LF}{\mathsf{LF}}
\newcommand{\cfunc}{\mathsf{C}}
\newcommand{\rle}{\mathsf{RLE}}
\newcommand{\run}{\mathsf{run}}
\newcommand{\offset}{\mathsf{offset}}
\newcommand{\bsearch}{\mathsf{backward\_search}}
\newcommand{\any}{\mathsf{RMTQ}}

\newcommand{\RLZ}{\mathsf{RLZ}}
\newcommand{\fopen}{\mathsf{open}}
\newcommand{\fclose}{\mathsf{close}}

\newcommand{\rank}{\mathsf{rank}}
\newcommand{\select}{\mathsf{select}}
\newcommand{\access}{\mathsf{access}}
\newcommand{\pred}{\mathsf{pred}}

\section{Preliminaries} \label{sec:preliminary}
Let $\Sigma$ be an ordered alphabet of size $\sigma$, 
$T$ be a string of length $n$ over $\Sigma$ and 
$|T|$ be the length of $T$. 
Let $T[i]$ be the $i$-th character of $T$ and 
$T[i..j]$ be the substring of $T$ that begins at position $i$ and ends at position $j$. 
The $T[i..]$ denotes the suffix of $T$ beginning at position $i$, i.e., $T[i..n]$. 
Let $T^R$ be the reversed string of $T$, i.e., $T^R = T[n] T[n-1] \cdots T[1]$. 
For two integers $i$ and $j$~($i \leq j$), $[i, j]$ represents $\{i, i+1, \ldots, j \}$.
For two strings $T$ and $P$, $T \prec P$ is that $T$ is lexicographically smaller than $P$.
The $\Occ(T, P)$ denotes all the occurrence positions of string $P$ in string $T$, i.e., 
$\Occ(T, P) = \{i \mid P = T[i..i+|P|-1], i \in [1, n-|P|+1] \}$. 
\emph{Right occurrence} $p$ of substring $T[i..j]$ is 
a subsequent occurrence position of $T[i..j]$ in $T$, 
i.e., any $p \in \Occ(T[i+1..], T[i..j])$. 

For a string $T$, character $c$, and integer $i$, 
The $\rank(T, c, i)$ returns the number of a character $c$ in $T[..i]$, i.e., 
$\rank(T, c, i) = |\Occ(T[..i], c)|$. 
$\access(T, i)$ returns $T[i]$. 
The $\select(T, c, i)$ returns the position of the $i$-th occurrence of a character $c$ in $T$.
If the number of occurrences of $c$ in $T$ is smaller than $i$, it returns $n+1$, i.e., 
$\select(T, c, i) = \min (\{ j \mid |\Occ(T[..j], c)| \geq i, j \in [1, n] \} \cup \{ n + 1 \})$ 
where $\min \{S\}$ returns the minimum value in a given set $S$. 

For an integer $x$ and set $S$ of integers, 
a \emph{predecessor} query $\pred(S,x)$ returns 
the number of elements that are no more than $x$ in $S$, 
i.e., $\pred(S,x) = |\{ y \mid y \leq x, y \in S \}|$.
A \emph{predecessor data structure} of $S$ supports predecessor queries on $S$. 
For an integer array $D$ and two positions $i, j$~($i \leq j$) on $D$, 
a \emph{range maximum query}~(RMQ) $\rmq(D, i, j, k)$ returns 
the maximum value in $D[i..j]$, i.e., $\rmq(D, i, j) = \max D[i..j]$, 
where $\max\{S\}$ returns the maximum value in a given set $S$. 

Our computation model is a unit-cost word RAM with a machine word size of $\Omega(\log_2 n)$ bits. 
We evaluate the space complexity in terms of the number of machine words. A bitwise evaluation of 
space complexity can be obtained with a $\log_2 n$ multiplicative factor. 

\subsection{Suffix Array~(SA) and SA interval}
The suffix array~($\SA$)~\cite{DBLP:journals/siamcomp/ManberM93} of string $T$ is an integer array of size $n$ 
such that $\SA[i]$ stores the starting position of $i$-th suffix of $T$ in lexicographical order. 
Formally, $\SA$ is the permutation of $[1..n]$ such that $T[SA[1]..] \prec \cdots \prec T[SA[n]..]$ holds. 
For example, $T={\bf mississippi\$}$ and $\SA = 12,11,8,5,2,1,10,9,7,4,6,3$. 
The left figure in Figure~\ref{fig:sa} depicts the sorted suffixes of $T$ and $\SA$ of $T$. 

Since suffixes in the suffix array are sorted in the lexicographical order, 
suffixes with prefix $Y$ occur continuously on an interval in the suffix array. 
We call this interval the \emph{SA interval} of $Y$. 
Formally, the SA interval of string $Y$ is interval $[b, e]$ such that 
$p \in \SA[b..e]$ holds for all $p \in \Occ(T, Y)$. 
For the above example, the SA intervals of {\bf si} and {\bf p} are $[9, 10]$ and $[7, 8]$, 
respectively.

\begin{figure}[t]
\begin{center}
\begin{tabular}{ccc}

		\includegraphics[width=0.36\textwidth]{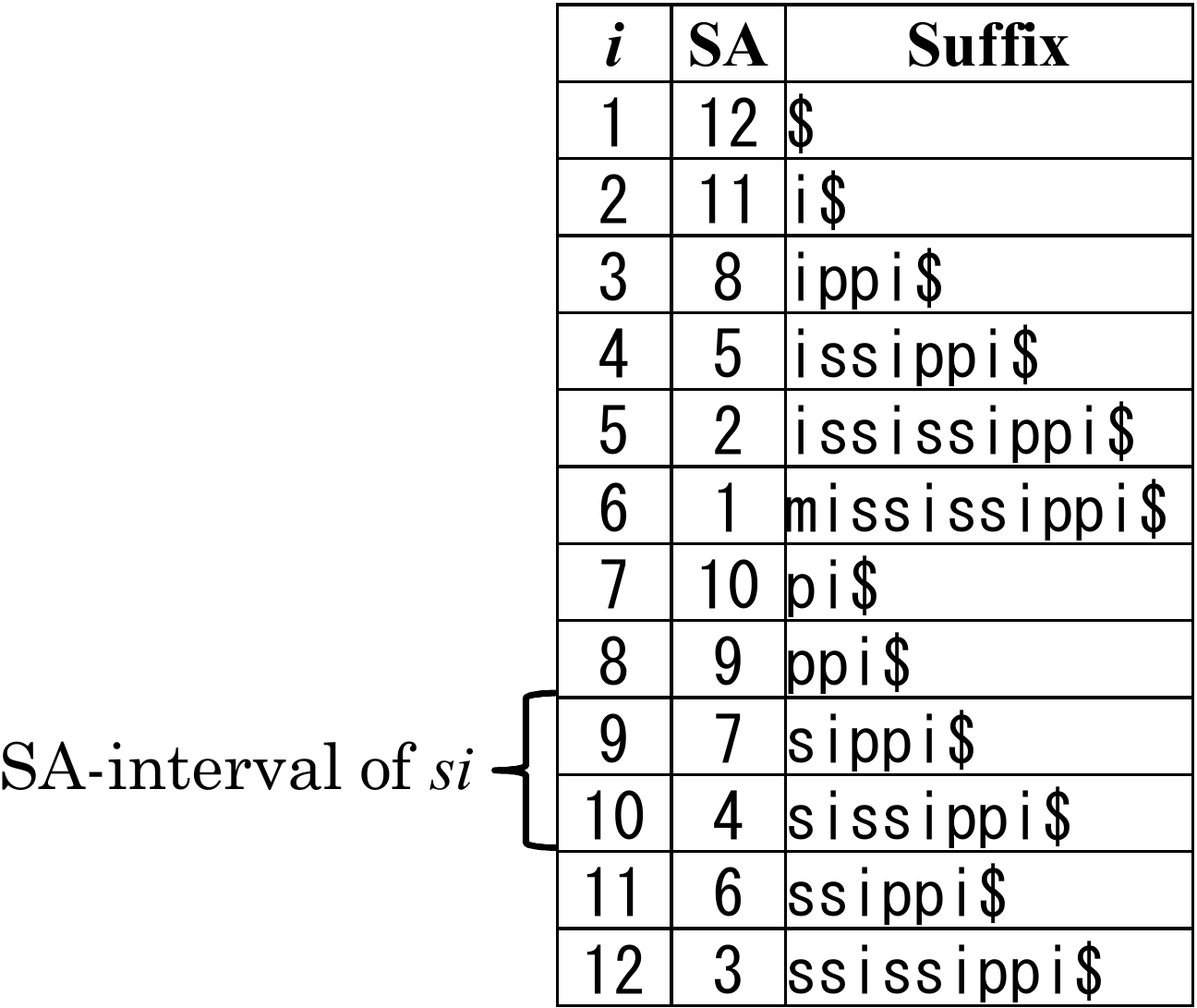} &
		\includegraphics[width=0.22\textwidth]{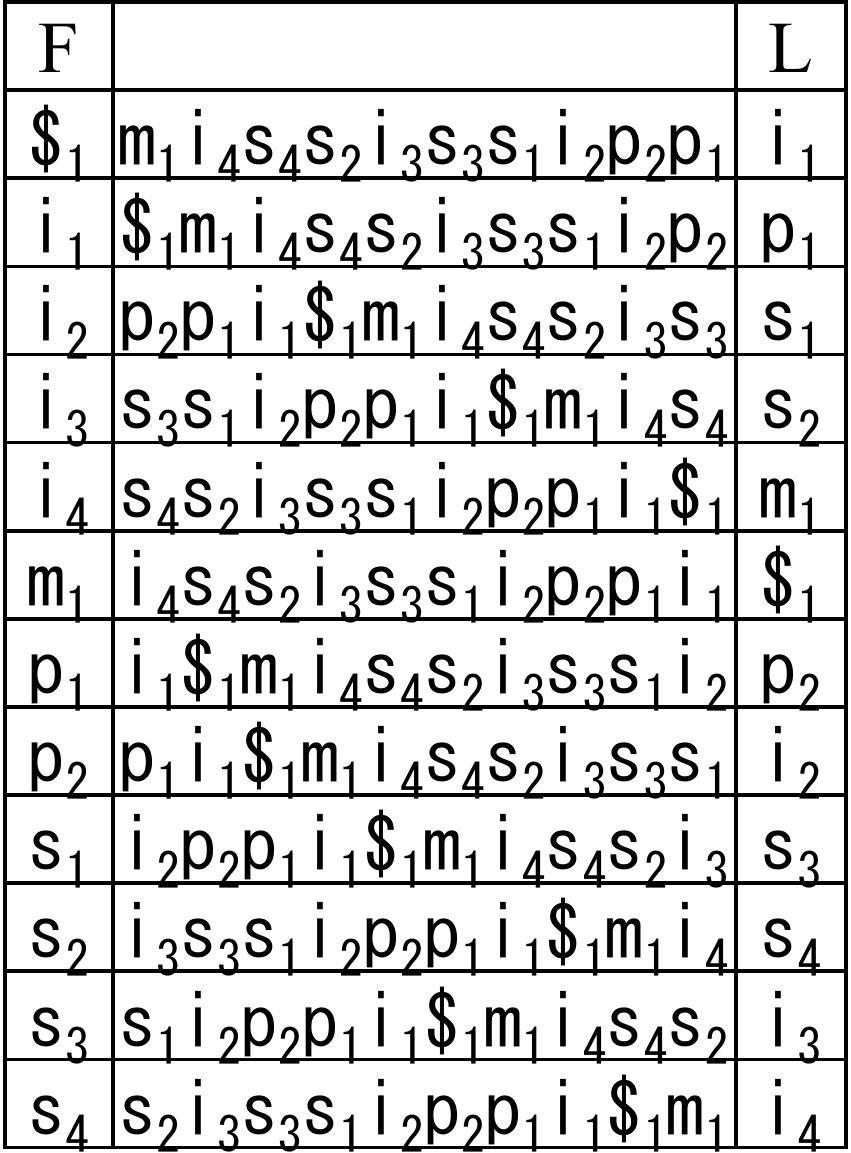} &
		\includegraphics[width=0.18\textwidth]{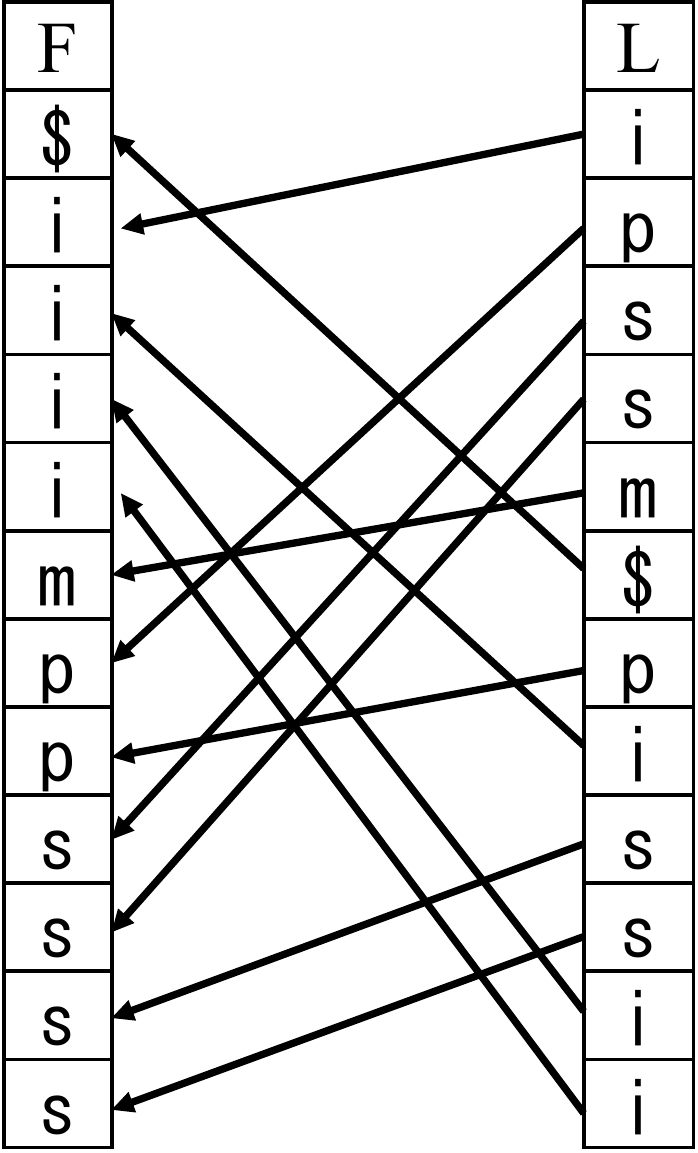} 
\end{tabular}
\end{center}
	  \caption{ 
	  Example for $\SA$ (left), $F$, $L$ (center), and LF function (right) of $T=mississippi\$$.
	  }
 \label{fig:sa}
\end{figure}

\subsection{BWT and backward search}
{\em The Burrows Wheeler Transform~(BWT)}~\cite{burrows1994block} of string $T$ is
a permutation of $T$ obtained as follows.
We sort all the $n$ rotations of $T$ in lexicographical order
and take the last character at each rotation in sorted order.
The $L$ is the permutation of $T$ such that for all $i \in [1..n]$, $L[i] = T[\SA[i]-1]$ holds if $\SA[i] \neq 1$ and $L[i] = T[n]$ holds otherwise.
Similarly, let $F$ be a permutation of $T$ that consists of the first characters in rotations in sorted order,
i.e., $F[i] = T[\SA[i]]$ for all $i \in [1..n]$.
The middle table in Figure~\ref{fig:sa} represents $F$, $L$, and sorted rotations of $T={\bf mississippi\$}$.

A property of BWT is that the $i$-th occurrence of character $c$ in $F$ corresponds to the $i$-th occurrence of $c$ in $L$. 
In other words, let $x$ and $y$ be the positions of the $i$-th occurrence of $c$ in $F$ and $L$, respectively.
Then $F[x]$ is a character of position $p$ in $T$ 
when $L[y]$ is a character of the same position in $T$. 
The $\LF$ function receives a position $y$ in $L$ as input 
and returns such corresponding position $x$ in $F$.
Since $F$ consists of the sorted characters, $\LF(y) = C[L[y]] + \rank(L, L[y], i)$ holds 
for all $y \in [1, n]$, where 
$C[c]$ is the number of occurrences of characters lexicographically less than $c\in \Sigma$ in $L$.

Backward search computes the SA interval of $cY$ for a given SA interval of $Y$
and character $c$ using the BWT of $T$.
Function $\bsearch(T, b, e, c)$ takes SA interval $[b, e]$ of $Y$ and character 
$c\in \Sigma$ as input and returns SA interval $[b', e']$ of $cY$.
We compute backward search using $LF$, $\rank$, and $\select$ queries on $L$. 
The LF function receives the first and last occurrences of $c$ in $L[b..e]$ and returns 
the first and last positions of the SA interval of $cY$ because $L[i]$ represents the character preceding $F[i]$ on $T$ for an integer $i \in [2..n]$.
We can compute the first and last occurrences of $c$ using $\rank$ and $\select$ queries on $L$. 

Formally, let $x = \rank(L, c, b-1)$ and $y = \rank(L, c, e)$; $b' = \LF(\select(L, c, x+1))$ and $e' = \LF(\select(L, c, y))$ hold if the length of the SA interval for $cY$ is not zero. 

\subsubsection{Run-length encoding and RLBWT}
For a string $T$, \emph{Run-length encoding} $\rle(T)$ is a 
partition of $T$ into substrings $f_{1}, f_{2}, \ldots, f_{r}$ such that 
each $f_{i}$ is a maximal repetition of the same character 
in $T$. We call each $f_i$ a \emph{run}.

{\em The RLBWT} of $T$ is the BWT encoded by the run-length encoding, i.e., $\rle(L)$.
The RLBWT is stored in $O(r)$ space because 
each run in the RLBWT can be encoded into a pair of integers $c$ and $\ell$, 
where $c$ is the character and $\ell$ is the length of the run.
We call such a representation the \emph{compressed form} of the RLBWT.

\subsection{LZ77}
For a string $T$, 
LZ77 parsing~\cite{LZ76} of the reversed $T$ 
greedily partitions $T$ into substrings~(phrases) $f_{z}, f_{z-1}, \ldots, f_{1}$ 
in right-to-left order such that each phrase is either (i) copied from a subsequent substring in $T$~(target phrase) or 
(ii) an explicit character~(character phrase). 
We denote LZ77 phrases of the reversed $T$ as $\LZ(T^R)$.

Formally, let $i'$ be the ending position of $f_{i}$ for $i \in [1, z]$, 
i.e., $i' = |f_{i'-1} \cdots f_{1}|+1$. 
Then $f_{1} = T[n]$, and $f_{i}$ is $T[i']$ for $i \in [2, z]$ 
if $T[i']$ is a new character~(i.e., $\Occ(T[i'+1..], T[i']) = \emptyset$); 
otherwise, $f_{i}$ is the longest suffix $P$ of $T[..i']$, which 
has right occurrences in $T$~(i.e., $|P| = \max \{ \ell \mid \Occ(T[i'-\ell+2..], T[i'-\ell+1..i']) \neq \emptyset, \ell \in [1, i'] \}$).

We can store LZ77 phrases in $O(z)$ space because 
we encode a target phrase into the pair $\langle p, \ell \rangle$ of the right occurrence $p$ and length $\ell$ of the phrase.
We call such representation the \emph{compressed form} of LZ77. 
For example, let $T = c bbbb baba aba b a$. 
Then $\LZ(T^{R}) = c,bbbb,baba,aba,b,a$, 
and the compressed form of $\LZ(T^{R})$ is 
$c, \langle 3, 4 \rangle,\langle 11, 4 \rangle,\langle 12, 3 \rangle,b,a$. 

\section{Policriti and Prezza's conversion algorithm}\label{sec:overview}
\begin{algorithm}[t]
    \KwData{the RLBWT $L$ of $T$ and the position $y$ of $T[n]$ on $L$}
	\KwResult{$\LZ(T^{R})$}
  \SetKwProg{Fn}{Function}{}{}
  $(b, e, p, \ell, x) \leftarrow (1, n, -1, 1, n)$\tcc*{Initialization}
  \While{$x \geq 1$}{
  \tcc{Let P be T[x..x+l-1]}
    $c \leftarrow \access(L, y)$\tcc*{Access T[x]}
    $[b', e'] \leftarrow \bsearch(T, b, e, c)$\tcc*{Compute the SA interval of P}
    $p'  \leftarrow \any(\SA, b', e', x+1)$\;
    \If(\tcc*[f]{Any right occurrence of P was not found}){$p' = -1$}{
        \If{$\ell > 1$}{
         $\mathit{output}$ $\langle p, \ell-1 \rangle$\; 
        }\Else{
         $\mathit{output}$ $c$\;
         $(x,y)=(x-1, \LF(y))$\;
        }
      $(b, e, p, \ell) \leftarrow (1, n, p', 1)$\;
    }\Else{
      $(b, e, p, \ell, x, y) \leftarrow (b', e', p', \ell+1, x-1, \LF(y))$\;
    }

  }
  \caption{Policriti and Prezza's conversion algorithm.\label{algo:lzbwt}}
\end{algorithm}
Policriti and Prezza's conversion algorithm~\cite{DBLP:journals/algorithmica/PolicritiP18} converts 
a compressed string of $T$ in the RLBWT format to another compressed string of $T^R$ in the LZ77 format 
while using data structures in $O(r)$ space.
The data structures support four queries: backward search, the LF function, access queries on the RLBWT $L$, and RMTQ on the suffix array of $T$.
The RMTQ $\any(D, i, j, k)$ takes value $k$, interval $[b, e]$, and array $D$ as inputs and returns a value larger than $k$ on interval $[b, e]$ in $D$.

The algorithm extracts the original string from $L$ in right-to-left order using the LF function and access queries on $L$ 
and computes LZ77 phrases sequentially using backward search and RMTQ.
In each step, the algorithm extracts a suffix of the original string (i.e., current extracted string) then outputs the LZ77 phrase called {\em current pattern} in the suffix. 
In each step, the following two conditions for the current pattern are guaranteed: 
(i) the current pattern has at least one right occurrence or is the string of length $0$; 
(ii) the length of the current pattern is no less than that of the following current pattern (i.e., the next computed LZ77 phrase). 

For computing the current extracted string in each step, the algorithm computes
(i) the next character preceding the current extracted string, 
(ii) computes the SA interval corresponding to the current pattern using the backward search, and
(iii) finds any right occurrence of the current pattern in the SA interval using RMTQ. 
If such right occurrence of the extended pattern does not exist, the algorithm outputs the current pattern as the next LZ77 phrase.
When the length of the current pattern is $0$, the next phrase is the next character. 
The algorithm repeats the above step until it extracts the whole string and outputs LZ77 phrases of $T^R$. 
Algorithm~\ref{algo:lzbwt} shows a pseudo code of the algorithm. 

The algorithm uses two dynamic data structures: one for supporting backward search, 
the LF function, and access queries on $L$ in $O(\log r)$ time and $O(r)$ space; 
the other for supporting RMTQ in $O(\log r)$ time and $O(r)$ space, which is detailed in the next subsection.

\subsection{RMTQ data structure}\label{sec:right}
\begin{figure}[htbp]
  \centerline{
		\includegraphics[width=0.28\textwidth]{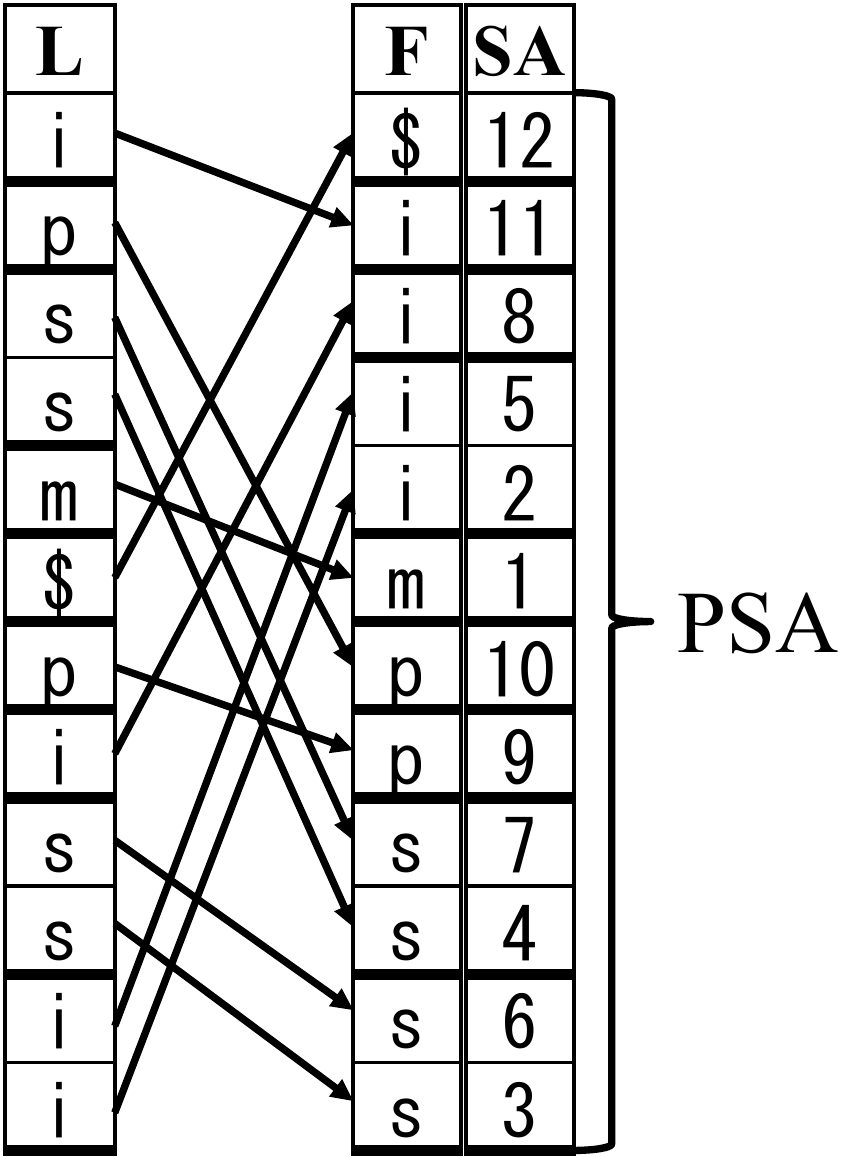} 
		\includegraphics[width=0.35\textwidth]{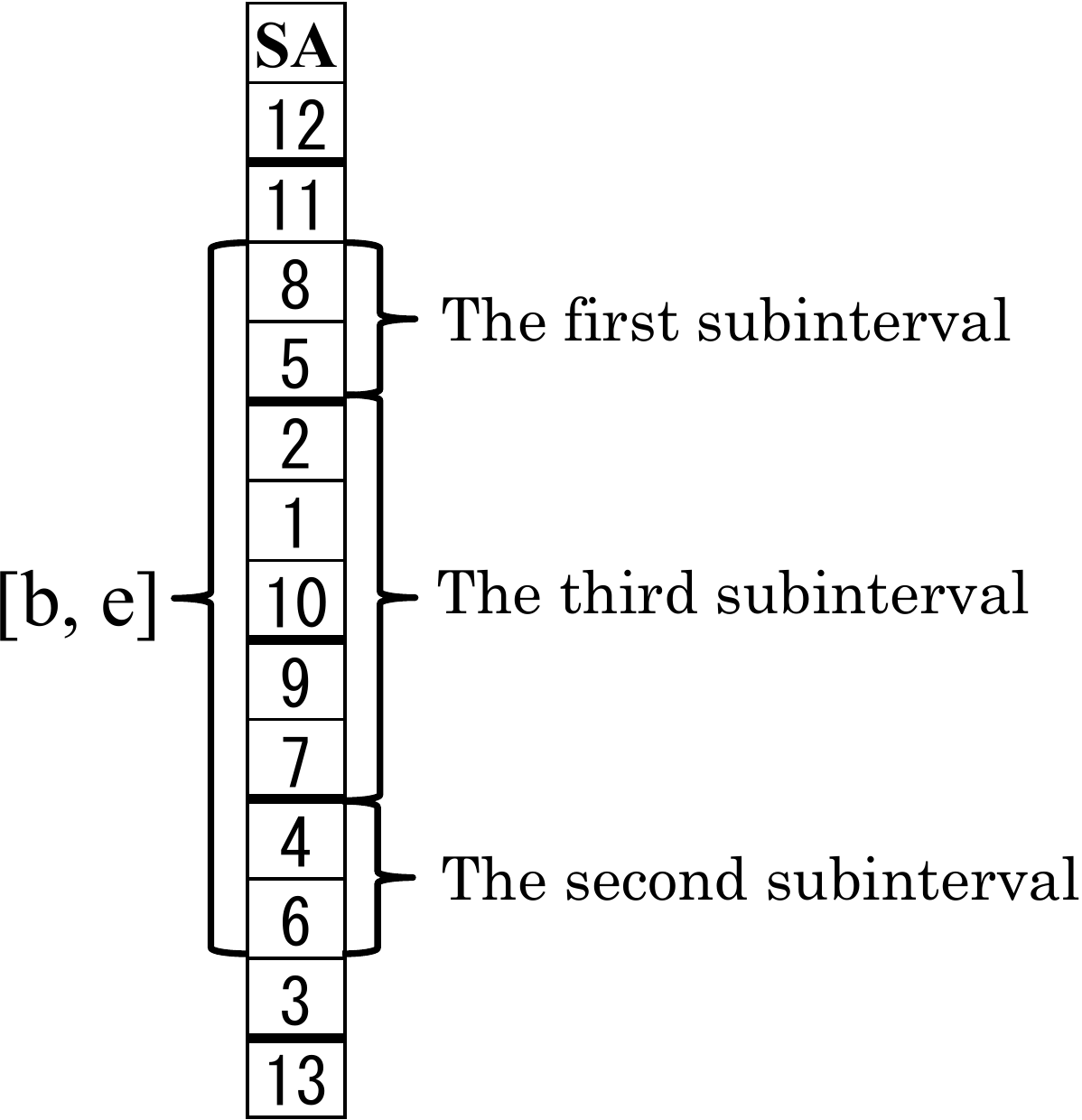} 
  }
	  \caption{ 
	  Left figure illustrates partitioned suffix array~(PSA). 	  
	  Bold horizontal lines on suffix array represent partitions on PSA; 
	  hence, PSA is $(12), (11), (8), (5, 2), (1), (10), (9), (7, 4), (6, 3)$. 
	  Right figure illustrates three subintervals used in Section~\ref{sec:improve_rmtq}. 
	  }
 \label{fig:psa}
\end{figure}
Policriti and Prezza presented an RMTQ data structure for fixed $k$, 
which can be updated when $k$ is decremented. 
The construction for RMTQ data structure partitions the suffix array of $T$ into subarrays 
for every run in $L$ by the LF function, resulting in $r$ subarrays in total. 
Since the $i$-th occurrence of any character $c$ in $F$ corresponds to the $i$-th occurrence of $c$ in $L$, 
the characters on every run in $L$ also occur continuously on $F$. 
The $F$ can be partitioned into $r$ substrings such that each substring corresponds to a run in $L$. 
We call such subarrays \emph{partitioned suffix arrays}~(PSAs).
The left in Figure~\ref{fig:psa} shows an example for the PSA of $T$ in Figure~\ref{fig:sa}. 

The data structure does not store the whole PSA. 
Instead, it stores only the first and last values larger than $k$ on each subarray of the PSA and their corresponding positions on it. 
The red-black tree is used to store those positions. 
We call such a first position (respectively, last position) on the $i$-th subarray for fixed $k$ 
a $k$-\emph{open position} (respectively, $k$-\emph{close position}) on the $i$-th subarray, which is denoted as $\fopen(i, k)$ (respectively, $\fclose(i, k)$). 

$\any (D, i, j, k)$ using the data structure is divided into two cases according to the relationship between the query interval $[b,e]$ and PSA: 
(A) there exists a subarray $\SA[p..q]$ of the PSA completely including interval $[b, e]$~(i.e., $p < b \leq e < q$ holds); and 
(B) such a subarray does not exist.
Both cases are detailed as follows.

For case (B), the query interval is partitioned into some subarrays of the PSA 
and contains either a prefix or suffix of each subarray.
Therefore, the interval contains at least one of the $k$-open and $k$-close positions 
if and only if the interval contains a value larger than $k$. 
We select an answer of RMTQ from the $k$-open positions and $k$-close positions on subarrays in the PSA in $O(\log r)$ time 
using the red-black tree storing the set of $k$-open and $k$-close positions. 

For case (A), the $\any (D,i,j,k)$ is computed using its computation result in the previous iteration of Algorithm~\ref{algo:lzbwt}.
The query interval $[b,e]$ represents the SA interval of a string $P$, 
and the length of $P$ is at least $2$ because the SA interval of a character does not satisfy case (A).
This means that the query interval in the previous iteration represents the SA interval of $P[2..]$, and
Algorithm~\ref{algo:lzbwt} computes the answer $p$~($\neq -1$) in the previous iteration. 
Thus, in case (A), $(p-1)$ is the answer for $P$ because
$P[1]$ is the character on $L$ such that $p$ is at the same position on the suffix array. 

Formally, 
let $\rle(L) = L_{1}, L_{2}, \ldots, L_{r}$, 
$p(i)$ be the starting position of $L_{i}$ in $L$~(i.e., $p(i) = |L_{1} \cdots L_{i}| - |L_{i}| + 1$), 
and $X$ be the permutation of $[1..r]$ such that 
$\LF(p(X[1])) < \ldots < \LF(p(X[r]))$ holds. 
Then the PSA of $T$ is $r$ subarrays $s_{1}, \ldots, s_{r}$ such that 
$s_{i} = \SA[\LF(p(X[i]))..$ $\LF(p(X[i]))+ |L_{X[i]}| -1]$ holds for all $i \in [1,r]$. 
Let $\fopen(i, k) = \min (\{ n + 1 \} \cup \{ j \mid \SA[j] \geq k, j \in [s_{i}..s_{i+1}-1] \} )$ and $\fclose(i, k) = \max (\{ 0 \} \cup \{ j \mid \SA[j] \geq k, j \in [s_{i}..s_{i+1}-1] \} )$ for $k \in [1, n]$ and $i \in [1, r]$.  
Let $\mathcal{T}_{k}$ be the set of $k$-open and close positions in the suffix array of $T$, 
i.e, $\mathcal{T}_{k} = \{ \fopen(1, k), \fclose(1, k) \ldots, \fopen(r, k), \fclose(r, k)  \}$. 
Then the following lemma holds. 
\begin{lemma}[\cite{DBLP:journals/algorithmica/PolicritiP18}]
Let $P$ be a substring of $T$ starting at a position $k$ and $\SA[b..e]$ be the SA interval of $P$. 
(1) In case (A), 
the length of $P$ is at least 2 and  
$\any(SA, b, e, k)$ can return $\any(SA, b', e', k+1)-1$, 
where $\SA[b..e]$ is the SA interval of $P[2..]$. 
(2)
In case (B), if $\mathcal{T}_{k} \cap [b..e] \neq \emptyset$ holds, 
then $\any(SA, b, e, k)$ can return the value at any position in $\mathcal{T}_{k} \cap [b..e]$; 
otherwise $\any(\SA, b, e, k) = -1$ holds. 
\end{lemma}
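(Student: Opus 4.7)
My plan is to prove the two parts separately; part~(1) carries the real content, while part~(2) is a direct case analysis on the definitions of $\fopen$ and $\fclose$.

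For part~(1), I would first observe that case~(A) forces $|P| \geq 2$: if $P$ were a single character $c$, its SA interval would coincide with the union of all PSA subarrays whose run character is $c$, and this union cannot be strictly contained in any single PSA subarray. For the main identity, let $[p_0..q_0]$ be the PSA subarray strictly containing $[b..e]$. By definition $[p_0..q_0] = \{\LF(j) : j \in [u..v]\}$ for some maximal $L$-run $[u..v]$ of some character $c$, and since $F[j] = c$ throughout $[p_0..q_0]$ one obtains $P[1] = c$. Because $\LF$ is an order-preserving shift within a single run, there is a sub-interval $[u^*..v^*] \subseteq [u..v]$ with $\LF(u^*) = b$, $\LF(v^*) = e$, and $\SA[u^*..v^*] = \SA[b..e] + 1$; strict containment yields $u < u^*$ and $v^* < v$. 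The crux is to show $[b'..e'] = [u^*..v^*]$. The inclusion $[u^*..v^*] \subseteq [b'..e']$ is immediate, because each $\SA[j] = \SA[\LF(j)] + 1$ with $j \in [u^*..v^*]$ is an occurrence of $P[2..]$. For the reverse, the positions $b - 1$ and $e + 1$ lie in $[p_0..q_0] \setminus [b..e]$ by strict containment, so the suffixes they index start with $c$ but do not continue with $P[2..]$; translating via $\LF^{-1}$ gives $\SA[u^* - 1] = \SA[b - 1] + 1$ and $\SA[v^* + 1] = \SA[e + 1] + 1$, which therefore are not occurrences of $P[2..]$, and hence $u^* - 1, v^* + 1 \notin [b'..e']$. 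Contiguity of $[b'..e']$ forces $[b'..e'] = [u^*..v^*]$, from which $\SA[b'..e'] = \SA[b..e] + 1$ and the claimed identity follow.

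For part~(2), the forward direction is immediate: any $j \in \mathcal{T}_k \cap [b..e]$ satisfies $\SA[j] \geq k$ by definition of $\fopen$ and $\fclose$, so $\SA[j]$ is a valid RMTQ answer. For the converse, suppose some $j^* \in [b..e]$ satisfies $\SA[j^*] \geq k$, and let $[s_i..s_{i+1}-1]$ be the PSA subarray containing $j^*$. Case~(B) says this subarray does not strictly contain $[b..e]$, so either $s_i \geq b$ or $s_{i+1} - 1 \leq e$; in the first subcase $\fopen(i, k) \in [b..e]$ because $\fopen(i, k) \geq s_i \geq b$ and $\fopen(i, k) \leq j^* \leq e$, and symmetrically the second subcase places $\fclose(i, k)$ in $[b..e]$. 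Thus $\mathcal{T}_k \cap [b..e] \neq \emptyset$, and the contrapositive yields $\any(\SA, b, e, k) = -1$ whenever the intersection is empty.

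The main obstacle will be the identification $[b'..e'] = [u^*..v^*]$ in part~(1): everything afterwards is bookkeeping, but establishing this equality requires carefully translating the strict-containment hypothesis into a statement about which suffixes do (and do not) extend to occurrences of $P[2..]$, and then invoking the fact that $[b'..e']$, as an SA interval, is a contiguous range.
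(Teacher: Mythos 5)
Your proof is correct and follows essentially the same route as the paper, which does not prove this lemma itself (it cites Policriti and Prezza) but sketches exactly this argument in the prose of Section~3.1: case~(B) via the prefix/suffix structure of the query interval over PSA subarrays and the open/close positions, and case~(A) via the order-preserving shift of $\LF$ within a run relating $\SA[b'..e']$ to $\SA[b..e]+1$. Your write-up merely makes the identification $[b'..e']=[u^{*}..v^{*}]$ explicit, which is a welcome rigorization rather than a different approach.
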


\section{Data structures for faster conversion}
\begin{algorithm}[t]
	\KwResult{$\any(\SA, b, e, x)$}
  \SetKwProg{Fn}{Function}{}{}
    $\hat{b} \leftarrow \pred(Z, b)$\tcc*{Get the index $\hat{b}$ of the subarray on the position b}
    $\hat{e} \leftarrow \pred(Z, e)$\tcc*{Get the index $\hat{e}$ of the subarray on the position e}
    $(x^{close}, v^{close}) \leftarrow M^{close}_{k}[\hat{b}]$\;
    $(x^{open}, v^{open}) \leftarrow M^{open}_{k}[\hat{e}]$\;
    $v^{\circ} \leftarrow \rmq(M, \hat{b}+1, \hat{e}-1)$\;
        \If{$x^{close} \in [b, e]$}{
            \Return $v^{close}$\;
        }\ElseIf{$x^{open} \in [b, e]$}{
            \Return $v^{open}$\;
        }\ElseIf{$v^{\circ} > k$ and $\hat{b} + 1 \leq \hat{e} - 1$}{
            \Return $v^{\circ}$\;
        }\Else{
            \Return $-1$\;
        }
  \caption{Our $\any(\SA, b, e, x)$ algorithm for case (B).\label{algo:lzbwt2}}
\end{algorithm}
We improve the query time in the data structure for backward search, the LF function, access query on $L$, and RMTQ for case (B) by presenting two novel data structures:
one supports the RMTQ for case (B); and the other supports backward search, the LF function, and access query on $L$.
Our data structures use static predecessor data structures in the internal and 
improve four query times by choosing the predecessor data structure with the fastest (estimated) query time. 
We show the results of our data structures, which are summarized in Table~\ref{table:result}.

\subsection{RMTQ data structure in case (B)}\label{sec:improve_rmtq}
Our RMTQ data structure for fixed $k$ consists of four arrays of length $r$: $k$-\emph{open array} $M_{k}^{open}$, 
$k$-\emph{close array} $M_{k}^{close}$, \emph{max value array} $M$, and \emph{starting position array} $Z$.
Data structures for the RMQ and predecessor query are built on $M$ and $Z$, respectively.

The $i$-th element of the $k$-open array~(respectively, $k$-close array) 
stores the pair of $k$-open position~(respectively, $k$-close position) and 
its corresponding value on the $i$-th subarray of the PSA.
The $k$-open and $k$-close arrays can be updated when $k$ is decremented. 
The $i$-th element of the max value array $M$ stores the maximum value on the $i$-th subarray of the PSA.
The $i$-th element of the starting position array stores the starting position of the $i$-th subarray on the PSA~(i.e., $Z[i] = p(X[i])$). 

Our algorithm for RMTQ~($\any(\SA, b, e, x)$ algorithm) consists of three parts:
(i) it divides a given query interval into at most three subintervals; 
(ii) computes the RMTQ for each subinterval; 
and (iii) returns the final answer of the RMTQ for a given interval using answers for subintervals.
The three subintervals are defined as follows: 
the first subinterval is on the first subarray of the PSA in the query interval,  
the second subinterval is on the last subarray of the PSA in the query interval, and 
the third subinterval is on the remaining subarrays. 
The right figure in Figure~\ref{fig:psa} illustrates those three subintervals. 
We compute the three subintervals using predecessor queries on $Z$ for a given query interval. 

The first subinterval is on a suffix of the first subarray; 
hence, the first subinterval has a value larger than $k$ if and only if the subinterval contains the $k$-close position of the first subarray.
Similarly, the second subinterval has a value larger than $k$ if and only if the subinterval contains the $k$-open position of the last subarray. 
The third subinterval is on middle subarrays; hence,
the third subinterval has a value larger than $k$ if and only if the maximal value is larger than $k$ on the subarrays. 
Therefore, we compute the RMTQ for the first subinterval~(respectively, the second subinterval)
by accessing the element of the first subarray on the $k$-close array~(respectively, the element of the last subarray on the $k$-open array).
We also compute the RMTQ for the third subinterval using the RMQ whose query interval covers the third subinterval on $M$. 

Algorithm~\ref{algo:lzbwt2} shows a pseudo code of our RMTQ algorithm. 
Since we can compute the RMQ in constant time~(e.g., \cite{DBLP:journals/siamcomp/FischerH11}), 
the query time of our RMTQ algorithm depends on the performance of predecessor queries on $Z$.
We can also convert $k$-open and close arrays to $(k-1)$-open and close arrays by changing at most two elements 
because the conversion can change the only element of the subarray containing $k$. 

Formally, let $\hat{b}$ and $\hat{e}$ be the ranks of the subarray of the PSA of $T$ which contains the positions $b$ and $e$, respectively, 
i.e., $\hat{b} = \pred(Z, b)$ and $\hat{e} = \pred(Z, e)$. 
For $k \in [1, n]$, let $M^{open}_{k}$ be the array of size $r$ such that for $i \in [1, r]$, 
$M^{open}_{k}[i] = (\SA[\fopen(i,k)], \fopen(i,k))$ if $\fopen(i, k) \neq n+1$ holds; otherwise $M^{open}_{k}[i] = (-1, \fopen(i,k))$. 
Similarly, let $M^{close}_{k}$ be the array of size $r$ such that for $i \in [1, r]$, 
$M^{close}_{k}[i] = (\SA[\fclose(i,k)], \fclose(i,k))$ if $\fclose(i, k) \neq 0$; otherwise 
$M^{close}_{k}[i] = (-1, \fclose(i,k))$. 
Let $M$ be the integer array of length $r$ such that $M[i]$ stores the maximal value in the $i$-th subarray of the PSA of $T$, 
i.e., $M[i] = \rmq(\SA, p(i), p(i+1)-1)$ for all $i \in [1, r]$. 
Let $Q(m, u)$ be the query time of the predecessor query on a set $S$ of size $m$ from a universe $[1, u]$ 
by a predecessor data structure of $O(m)$ space. 
Then the following lemmas hold. 
\begin{lemma}
In case (B), $\any(\SA, b, e, k) \neq -1$ holds if and only if there exists an answer of $\any(\SA, b, e, k)$ 
in $M^{open}_{k}[\hat{b}], M^{close}_{k}[\hat{e}]$, or $\rmq(M, \hat{b}+1, \hat{e}-1, k)$. 
\end{lemma}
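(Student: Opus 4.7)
The plan is to fix case (B), which by the preceding discussion means $\hat b<\hat e$, so the query range $[b,e]$ is not contained in a single PSA subarray. I would partition $[b,e]$ into three PSA-aligned subintervals
\[
I_1 = [b,\; Z[\hat b+1]-1], \qquad
I_2 = [Z[\hat b+1],\; Z[\hat e]-1], \qquad
I_3 = [Z[\hat e],\; e],
\]
where $I_2$ is understood to be empty when $\hat e=\hat b+1$. By definition of $\pred$ on $Z$, these three pieces are disjoint and their union is exactly $[b,e]$, so $\any(\SA,b,e,k)\neq -1$ iff at least one $I_t$ contains a position $j$ with $\SA[j]\geq k$. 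The proof then reduces to matching each $I_t$ with the corresponding witness in $M^{open}_k$, $M^{close}_k$, or $M$.

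For $I_1$, which is a suffix of the $\hat b$-th PSA subarray, I would use that $\fclose(\hat b,k)\in s_{\hat b}$ is by definition the largest position in that subarray with $\SA\geq k$, and in particular $\fclose(\hat b,k)\leq Z[\hat b+1]-1$ automatically. Hence $I_1$ contains a position with $\SA\geq k$ iff $\fclose(\hat b,k)\geq b$, which is precisely the condition that the position stored in the relevant entry of $M^{close}_k$ falls inside $[b,e]$ (noting that $Z[\hat b+1]-1\leq e$ because $\hat b<\hat e$). A symmetric argument handles $I_3$: the relevant entry of $M^{open}_k$ indexed by $\hat e$ lies in $I_3$ (automatically satisfying the lower bound $\geq Z[\hat e]$) iff $\fopen(\hat e,k)\leq e$, which is exactly the existence criterion for $I_3$. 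For $I_2$, since it is the disjoint union $\bigsqcup_{i=\hat b+1}^{\hat e-1}s_i$ of whole subarrays and $M[i]$ stores the per-subarray maximum of $\SA$, the existence of a position with $\SA\geq k$ in $I_2$ is equivalent to $\rmq(M,\hat b+1,\hat e-1)\geq k$, with the degenerate case $\hat e=\hat b+1$ handled by the explicit guard $\hat b+1\leq\hat e-1$ in Algorithm~\ref{algo:lzbwt2}.

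Finally, I would conclude by taking the disjunction: $\any(\SA,b,e,k)\neq -1$ holds iff one of the three witnesses (the open entry for $\hat e$, the close entry for $\hat b$, or the rmq on the middle slice) certifies a value $\geq k$, which is the statement of the lemma. The main obstacle I expect is the boundary bookkeeping: I must carefully justify that $\fopen$ of the left subarray and $\fclose$ of the right subarray cannot witness their respective subintervals in general (so the "natural" pairing truly is close-on-$\hat b$ and open-on-$\hat e$), and that splitting strictly at $Z[\hat b+1]$ and $Z[\hat e]$ neither leaves a gap nor overlaps $I_1$ with $I_3$. Beyond this, the proof is a direct verification using the definitions of $\fopen$, $\fclose$, $M$, and $Z$.
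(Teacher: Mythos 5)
Your proof is correct and follows essentially the same argument the paper gives (informally, in the prose of Section~\ref{sec:improve_rmtq}): split $[b,e]$ at the PSA boundaries $Z[\hat b+1]$ and $Z[\hat e]$, pair the suffix of subarray $\hat b$ with its $k$-close position, the prefix of subarray $\hat e$ with its $k$-open position, and the middle block of whole subarrays with the RMQ on $M$. One remark: what you prove is the version consistent with Algorithm~\ref{algo:lzbwt2}, which reads $M^{close}_{k}[\hat{b}]$ and $M^{open}_{k}[\hat{e}]$; the lemma as printed swaps these two indices (writing $M^{open}_{k}[\hat{b}]$ and $M^{close}_{k}[\hat{e}]$), which appears to be a typo in the statement rather than an error in your argument, and your observation that the ``natural'' pairing must be close-on-$\hat b$ and open-on-$\hat e$ is exactly the right justification.
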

\begin{lemma}\label{lem:caseBquerytime}
Our data structure for case (B) can compute $\any(SA, b, e, k)$ in $O(1 + Q(r, n))$ time. 
The space usage is $O(r)$ space.
\end{lemma}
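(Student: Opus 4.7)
The plan is to reduce both correctness and complexity to ingredients already in place: the preceding lemma for correctness, a standard static constant-time $\rmq$ index over $M$ for the middle block, and the user-supplied predecessor structure on $Z$ for locating the subarrays that contain $b$ and $e$. I would organize the argument in three parts: correctness of Algorithm~\ref{algo:lzbwt2}, query-time accounting, and space accounting.

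For correctness, I would invoke the preceding lemma, which asserts that in case (B) any RMTQ witness must occur at $\fclose(\hat{b}, k)$, at $\fopen(\hat{e}, k)$, or strictly inside one of the subarrays indexed by $[\hat{b}+1, \hat{e}-1]$. Algorithm~\ref{algo:lzbwt2} inspects exactly these three candidates: it reads $M^{close}_k[\hat{b}]$ and $M^{open}_k[\hat{e}]$ to recover the boundary candidates and issues a single $\rmq$ on $M[\hat{b}+1..\hat{e}-1]$ for the middle block. I would then verify that the containment tests guarantee a returned boundary witness actually lies in $[b, e]$, that the guard $\hat{b}+1 \leq \hat{e}-1$ rules out empty middle ranges, and that the threshold test $v^{\circ} > k$ rejects spurious middle candidates. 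Consequently, whenever a witness exists the algorithm returns one, and otherwise returns $-1$.

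For the query time, I would walk through the algorithm line by line. Computing $\hat{b}$ and $\hat{e}$ uses two predecessor queries on $Z$, each of cost $Q(r, n)$; the accesses to $M^{close}_k[\hat{b}]$ and $M^{open}_k[\hat{e}]$ are $O(1)$; the $\rmq$ on $M$ is $O(1)$ using a standard static index such as that of Fischer and Heun; the remaining comparisons and returns are $O(1)$. The sum is $O(1 + Q(r, n))$. For the space, each of the four arrays $M^{open}_k$, $M^{close}_k$, $M$, $Z$ holds $r$ entries of $O(1)$ words; the static RMQ index on $M$ fits in $O(r)$ words; and the predecessor structure on $Z$ occupies $O(r)$ words by the convention embedded in the definition of $Q(m, u)$. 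Summed, the working space is $O(r)$ words.

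I do not anticipate any real obstacle: once the preceding lemma is cited, both bounds reduce to a direct tally of the costs of a fixed number of predecessor, RMQ, and array operations. The one subtlety worth flagging explicitly is that $M^{open}_k$ and $M^{close}_k$ depend on $k$; since a decrement of $k$ affects only the subarray that contains position $k$, at most two entries change per decrement, so the maintenance cost lives outside the query and does not enter the bound $O(1 + Q(r, n))$.
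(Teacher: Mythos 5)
Your proposal is correct and follows the same route as the paper's (much terser) proof: build a predecessor structure on $Z$ and a constant-time RMQ index on $M$, then charge the query to two predecessor lookups, one RMQ, and $O(1)$ array accesses and comparisons in Algorithm~\ref{algo:lzbwt2}, with the correctness of the three-candidate case analysis delegated to the preceding lemma. Your extra remarks on the containment tests and on the $k$-decrement maintenance being outside the query bound are consistent with the paper's surrounding text and Lemma~\ref{lem:updatemk}.
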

\begin{proof}
We construct the predecessor data structure for $Z$, which supports predecessor queries in $Q(r, n)$ time, 
and the RMQ data structure for $M$ which supports the RMQ in $O(1)$ time using~\cite{DBLP:journals/siamcomp/FischerH11}. 
Then Lemma~\ref{lem:caseBquerytime} holds by Algorithm~\ref{algo:lzbwt2}.
\end{proof}

\begin{lemma}\label{lem:updatemk}
For given position $x$ of a $k$ on the suffix array of $T$~(i.e., $\SA[x] = k$), 
we can convert $M^{open}_{k}$ and $M^{close}_{k}$ to $M^{open}_{k-1}$ and $M^{close}_{k-1}$ in $O(1+Q(r, n))$ time 
using the predecessor data structure for $Z$.
\end{lemma}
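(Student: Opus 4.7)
The plan is to show that $M^{open}_{k-1}$ and $M^{open}_{k}$ (respectively, $M^{close}_{k-1}$ and $M^{close}_{k}$) differ in at most one entry each, so the update reduces to a single predecessor lookup plus $O(1)$ bookkeeping.

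The underlying observation is that $\{\,j\mid \SA[j]\geq k-1\,\}$ is obtained from $\{\,j\mid \SA[j]\geq k\,\}$ by inserting exactly one SA position, namely the one whose SA value lies at the crossing threshold. Because the PSA partitions $[1,n]$ into disjoint subarrays $s_{1},\ldots,s_{r}$, this new position belongs to exactly one subarray $s_{\hat{\imath}}$, and every other subarray retains its stored open and close positions. Consequently only the entries at index $\hat{\imath}$ of $M^{open}$ and $M^{close}$ can change, confirming the ``at most two elements'' intuition stated informally earlier in the paper.

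First I would locate the SA position $x'$ whose inclusion status flips. Given the input $x$ with $\SA[x]=k$, this is obtained in $O(1)$ from the surrounding algorithm's tracking (the $y$ variable and a single $\LF$ step, as in Algorithm~\ref{algo:lzbwt}), so no new data structure is needed for this step. Second I would invoke $\hat{\imath}=\pred(Z,x')$ on the predecessor structure for $Z$ to identify the affected subarray; this costs $Q(r,n)$ time. Third I would compare $x'$ against $M^{open}_{k}[\hat{\imath}]$ and $M^{close}_{k}[\hat{\imath}]$: $x'$ replaces the stored open iff it is strictly smaller than the current $\fopen$ (or the current $\fopen$ equals $n+1$), and symmetrically $x'$ replaces the stored close iff it exceeds the current $\fclose$ (or the current $\fclose$ equals $0$). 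Every other entry is copied unchanged, giving total time $O(1+Q(r,n))$.

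There is no substantive technical obstacle; the proof is essentially the one-subarray observation above together with a case split on whether the new position beats the current open or close. The only delicacy is the bookkeeping that identifies $x'$ from the given $x$, but this information is already maintained by the conversion algorithm, so beyond the predecessor structure on $Z$ no additional machinery is required.
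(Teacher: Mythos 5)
Your proof is correct and follows essentially the same approach as the paper's: only the single PSA subarray containing the newly-qualifying SA position can change, that subarray is located with one predecessor query on $Z$, and its open/close entries are updated by an $O(1)$ comparison against the stored $\fopen$/$\fclose$ values. You are in fact slightly more careful than the paper, whose proof applies $\pred(Z,\cdot)$ directly to the given position $x$ with $\SA[x]=k$ rather than to the position whose membership in $\{\,j \mid \SA[j]\geq \cdot\,\}$ actually flips when the threshold drops to $k-1$ (namely the position of value $k-1$); your extra $\LF$ step resolves this off-by-one without affecting the stated time bound.
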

\begin{proof}
$M^{open}_{k}[i] = M^{open}_{k-1}[i]$ and $M^{close}_{k}[i] = M^{close}_{k-1}[i]$ hold for all $i \in ([1, r] \setminus \{ p \})$, 
where $p = \pred(Z, x)$. Therefore, we appropriately update $M^{open}_{k}[p]$ and $M^{open}_{k}[p]$ for $k-1$.
\end{proof}

\subsection{Data structure for backward search, LF, and access queries}
We leverage the static data structures presented by Gagie et al.~\cite{DBLP:conf/soda/GagieNP18} for backward search, the LF function, and access queries on $L$ instead of Policriti and Prezza's dynamic data structure.
The static data structures compute three queries by executing only the constant number of predecessor queries, 
and the three queries using the static data structures can be faster than those using Policriti and Prezza's dynamic data structure.
 
Since the time for the predecessor query on the static data structures is proportional to the alphabet size of the input RLBWT, the alphabet size slightly increases the query times. 
For faster queries, 
we replace one of the static data structures for the predecessor queries depending on the alphabet size with an array of size $\sigma$. 
We obtain the following lemma. 
\begin{lemma}\label{lem:rlbwtdata}
Let $C(m, u)$ be the construction time for the predecessor data structure of $O(m)$ space, which supports predecessor queries in $Q(m, u)$ time. 
We can construct the data structure of $O(r + \sigma)$ space, 
which supports $\bsearch$, $\LF$, and $\access$ queries for $L$ in $O(1 + Q(r, n))$ time, by processing the RLBWT of $T$ in
$O(C(r, n) + \sigma + r)$ time and $O(r + \sigma)$ working space. 

\end{lemma}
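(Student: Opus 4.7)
The plan is to instantiate the static r-index of Gagie, Navarro, and Prezza~\cite{DBLP:conf/soda/GagieNP18} as a black box, and then to choose its internal predecessor structures so that queries do not pay any explicit $\sigma$-dependence.

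First I would recall the reduction from~\cite{DBLP:conf/soda/GagieNP18}: each of $\access(L,i)$, $\LF(i)$, and $\bsearch$ is computed by a constant number of look-ups of two kinds, namely (i) predecessor queries over sets of $O(r)$ integers drawn from the position universe $[1,n]$~(e.g., run-starting positions in $L$ and their LF-images, and the per-character lists of run boundaries), and (ii) constant-size look-ups indexed by a character $c\in\Sigma$~(the array $C$, pointers to the head of the $c$-runs, and similar per-character fields). The only place where the alphabet size enters the original query time is the latter, because in~\cite{DBLP:conf/soda/GagieNP18} it is realised by a predecessor structure over the universe $[1,\sigma]$.

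Next I would equip the position side with the best available static predecessor data structure, which by assumption can be built in $C(r,n)$ time, uses $O(r)$ space, and answers queries in $Q(r,n)$ time. For the character side, I would replace the $[1,\sigma]$-predecessor by a plain array of length $\sigma$ indexed directly by $c$, whose cell for $c$ stores every field that the queries read for that character. This eliminates the $\sigma$-dependence from the query time at the cost of $O(\sigma)$ extra space and $O(\sigma)$ initialisation. With these two substitutions, each of the three queries runs in $O(1+Q(r,n))$ time, the total space is $O(r+\sigma)$, and construction scans the compressed RLBWT once in $O(r)$ time to collect runs and fill the character array, then builds the position-predecessor in $O(C(r,n))$ time, giving $O(C(r,n)+\sigma+r)$ time and $O(r+\sigma)$ working space overall.

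The main obstacle at the level of this plan is not algorithmic but bookkeeping: one has to check that every place where the Gagie--Navarro--Prezza algorithm invokes the $[1,\sigma]$-predecessor is in fact a direct look-up by a known character rather than a genuine predecessor query on characters, so that replacing it by a direct array is legal. Inspection of~\cite{DBLP:conf/soda/GagieNP18} shows this to be the case: the character-indexed structure is only ever queried with a specific $c$ arriving either as the input of $\bsearch$ or as the output of an $\access$ on $L$, both of which are available directly.
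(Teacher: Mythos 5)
Your proposal matches the paper's own proof in both strategy and detail: the paper likewise builds on the static structures of Gagie et al., pays $Q(r,n)$ only for predecessor queries over position sets of size $O(r)$ (the per-character run-boundary arrays $B_c$ and rank arrays $V_c$, plus the global run-start array), and removes the alphabet dependence by indexing the character side directly with arrays of size $\sigma$, yielding exactly the claimed $O(1+Q(r,n))$ query time and $O(C(r,n)+\sigma+r)$ construction. The only difference is presentational — the paper's appendix spells out the rank/select/access reductions explicitly rather than citing the r-index as a black box — so no further comment is needed.
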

\begin{proof}
See Appendix.
\end{proof}

\subsection{Improved Policriti and Prezza's algorithm}\label{sec:analysis}
Algorithm~\ref{algo:lzbwt} using our data structures requires $O(r + \sigma)$ space,
which can be $\omega(r)$ when $\sigma \geq r$, e.g., $\sigma = n^2$. 
To bound the space usage to $O(r)$, we reduce the alphabet size of the RLBWT $L$ to at most $r$ by renumbering characters in $L$.

We modify Algorithm~\ref{algo:lzbwt} as follows: 
(i) We replace each character $c$ in $L$ with the rank of $c$ in $L$~(i.e., $|\{L[i] \mid L[i] \leq c, i \in [1, n]\}|$) 
and construct the new RLBWT $L'$ over the alphabet of at most $r$.
We call the converted string the \emph{shrunk string} of $L$.
(ii) We convert $L'$ to LZ77 phrases of the string $T'^{R}$ recovered from $L'$ using Algorithm~\ref{algo:lzbwt}. 
(iii) We recover the LZ77 phrases of $T^{R}$ from that of $T'^{R}$ using the \emph{inverse array} $W$,  
where the $i$-th element of the array stores the original character of rank $i$~(i.e., $W[L'[i]] = L[i]$ holds for any $i \in [1, n]$). 
The modified algorithm works correctly because 
(1) our backward search queries receive only characters that appear in the RLBWT,
(2) $L'$ is the RLBWT of the shrunk string of $T$, and
(3) the form of LZ77 phrases is independent of the alphabet, i.e., 
we obtain the $i$-th LZ77 phrase of $T^{R}$ by mapping characters in the $i$-th LZ77 phrase of $T'^{R}$ into the original characters.

Finally, we obtain a conversion algorithm from RLBWT to LZ77 in $O(n(1 + Q(r, n)) + C(r, n))$ time and $O(r)$ space, and
the algorithm depends on the performance of the static predecessor data structure.
There exist two predecessor data structures such that (1) $Q(r, n) = O(\sqrt{\log r / \log \log r})$ 
and $C(r, n) = O(r \sqrt{\log r / \log \log r})$ hold~\cite{DBLP:journals/jcss/BeameF02} 
and (2) $Q(r, n) = O(\log \log n)$ and $C(r, n) = O(r)$ hold~\cite{DBLP:conf/cpm/0001G15}. 
Since we can compute $r$ and $n$ by processing the RLBWT in $O(r)$ time, 
we choose the faster predecessor data structure between those predecessor data structures. 
Therefore, we obtain our results of our data structures, which are listed in Table~\ref{table:result}. 

Formally, the following lemmas and theorem hold.
\begin{lemma}\label{lem:effective2}
The following statements hold.
(1) We can compute the shrunk string $L'$ of $L$ and the inverse array $W$ in $O(r)$ time and working space.
(2) The $L'$ is the RLBWT of the shrunk string $T'$ of $T$.
(3) The $|\LZ(T'^{R})| = |\LZ(T^{R})|$ and $\LZ(T^{R})[i][j] = W[\LZ(T'^{R})[i][j]]$ hold for $i \in [1, z]$ and $j \in [1, |\LZ(T^{R})[i]|]$,
where $z$ is the number of LZ77 phrase of $T^{R}$.
(4) We can convert the compressed form of $\LZ(T'^{R})[i]$ to that of $\LZ(T^{R})[i]$ in constant time using $W$ for all $i \in [1, z]$.
\end{lemma}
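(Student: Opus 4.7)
The plan is to prove the four claims in turn. Claim (2) is the conceptual heart; the others reduce to bookkeeping on runs, phrases, and table lookups.

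For (1) I would scan the compressed RLBWT of $L$, which is a list of $r$ pairs $(c_1,\ell_1),\ldots,(c_r,\ell_r)$, and collect the set $\Sigma_L\subseteq\Sigma$ of distinct characters that actually occur; clearly $|\Sigma_L|\le r$. I would then sort $\Sigma_L$ in $O(r)$ time by a standard integer sort on the word RAM, define $W[k]$ to be the $k$-th element of the sorted order, and store the inverse map $\pi\colon\Sigma_L\to[1,|\Sigma_L|]$ as a small dictionary. Producing $L'$ in compressed form then amounts to replacing each $(c_i,\ell_i)$ by $(\pi(c_i),\ell_i)$. All steps touch $O(r)$ machine words, so both the time and the working space are $O(r)$.

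For (2) the key observation is that $\pi$ is an order-preserving bijection onto $[1,|\Sigma_L|]$ and that the characters occurring in $T$ coincide with those occurring in $L$ (since $L$ is a permutation of $T$). Because the lexicographic comparison of two suffixes of $T$ depends only on the \emph{relative order} of the characters involved, applying $\pi$ pointwise to $T$ preserves every such comparison, and hence $\SA(T')=\SA(T)$. Consequently, for every $i\in[1,n]$, $L'[i]=T'[\SA(T')[i]-1]=\pi(T[\SA(T)[i]-1])=\pi(L[i])$; in particular $L'[i]=L'[i-1]$ if and only if $L[i]=L[i-1]$, so the run structure of $L'$ agrees with that of $L$, and therefore $L'$ is the RLBWT of $T'$.

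For (3), LZ77 parsing is defined purely by character equalities and by positions, both of which are preserved under $\pi$. By induction on the phrase index, two substrings of $T^R$ are equal if and only if their $\pi$-images in $T'^R$ are equal, so the set of right occurrences witnessing each greedy longest-match step is identical in the two strings; hence the phrase boundaries coincide, giving $|\LZ(T'^R)|=|\LZ(T^R)|$, and applying $W$ character by character yields $\LZ(T^R)[i][j]=W[\LZ(T'^R)[i][j]]$. For (4), the compressed form of a target phrase is $\langle p,\ell\rangle$, a pair that is identical for $T^R$ and $T'^R$ by (3), so no translation is required; a character phrase is a single character $c'$, output as $W[c']$ via one table lookup, giving constant-time conversion per phrase. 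The main obstacle will be (2): once suffix arrays are shown to be invariant under order-preserving alphabet bijections, the remaining claims reduce to routine bookkeeping, and the subtlety to watch is justifying that $\pi$ pulled back from $L$ to $T$ really is order preserving, which rests on the set-equality of characters appearing in $L$ and $T$.
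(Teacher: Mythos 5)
Your proof is correct and follows essentially the same route as the paper's: an order-preserving renaming of the at most $r$ characters occurring in $L$, invariance of the suffix array (hence of the BWT and its run boundaries) and of occurrence sets under that renaming, and a constant-time per-phrase translation via $W$. The only cosmetic difference is in part (1), where the paper obtains the sorted character order by building the suffix array of the length-$r$ string of run heads while you sort the distinct run-head characters directly --- both rest equally on sorting at most $r$ word-sized integers in $O(r)$ time --- and your argument that the suffix arrays of $T$ and $T'$ coincide is in fact more explicit than the paper's one-line assertion.
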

\begin{proof}
(1)
We construct the string $E$ that consists of $r$ first characters 
in the run-length encoding of $L$~(i.e., $E = \rle(L)[1][1], \rle(L)[2][1], \ldots, \rle(L)[r][1]$) and 
construct the suffix array of $E$ in $O(r)$ time and working space~\cite{DBLP:conf/icalp/KarkkainenS03}. 
We construct the shrunk string $E'$ of $E$ and $W$ using the suffix array and construct $L'$ using $E'$.
(2) Let $\SA$ and $\SA'$ be the suffix arrays of $T$ and $T'$. Then $\SA[i] = \SA'[i]$ holds for all $i \in [1, n]$. 
Therefore, the RLBWT of $T'$ is $L'$. 
(3) This holds because $\Occ(T, T[x..y]) = \Occ(T',T'[x..y])$ holds for two integers $1 \leq x \leq y \leq n$. 
(4) If $\LZ(T'^{R})[i]$ is a target phrase, we return the phrase as $\LZ(T^{R})[i]$. 
Otherwise, we return $W[\LZ(T'^{R})[i]]$ as $\LZ(T^{R})[i]$.
\end{proof}
\begin{lemma}\label{lem:caseBconstructtime}
We can construct the data structure of Lemma~\ref{lem:caseBquerytime} in $O(n(1+ Q(r, n)) + C(r, n) + \sigma)$ time and 
$O(r + \sigma)$ working space using the RLBWT data structure of Lemma~\ref{lem:rlbwtdata}. 
\end{lemma}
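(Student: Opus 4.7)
The plan is to build each ingredient of the data structure of Lemma~\ref{lem:caseBquerytime}---namely $Z$ with its predecessor index, $M$ with its RMQ index, and the initial arrays $M^{open}_{n+1}$, $M^{close}_{n+1}$---one at a time, using only $O(r+\sigma)$ working space. I first invoke Lemma~\ref{lem:rlbwtdata} to obtain, in $O(C(r,n)+\sigma+r)$ time and $O(r+\sigma)$ space, the static RLBWT structure that supports $\LF$ and $\access$ on $L$ in $O(1+Q(r,n))$ time each. From here on, every interaction with $L$ or $\SA$ is mediated by this structure; the plain BWT and suffix array are never materialised.

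To produce $Z$, I scan the compressed RLBWT to recover the starting position $p(i)$ of each of the $r$ runs, compute $\LF(p(i))$ for $i=1,\ldots,r$, and radix-sort the result in $O(r)$ time. By the definition of the PSA these $r$ values are exactly the starting positions of the subarrays, and being distinct integers in $[1,n]$ they are sortable in linear time, yielding $Z$. Feeding $Z$ to the predecessor construction assumed in the statement produces the predecessor index in $C(r,n)$ time. The arrays $M^{open}_{n+1}$ and $M^{close}_{n+1}$ are filled with the trivial placeholders $(-1,n+1)$ and $(-1,0)$ in $O(r)$ time, faithfully representing the initial state required by Algorithm~\ref{algo:lzbwt}, whose first RMTQ uses $k=n+1$.

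The main obstacle is $M$, since we cannot afford to materialise the SA. I sidestep this by walking the LF cycle. The input position $y$ of $T[n]$ on $L$ satisfies $\SA[y]=1$, and the identity $\SA[\LF(y')]=\SA[y']-1$ (with wrap-around to $n$ when $\SA[y']=1$) implies that $n$ successive LF applications starting from $y$ traverse positions $y_n,y_{n-1},\ldots,y_1$ in $L$ with $\SA[y_k]=k$. At each visit I compute $\hat{b}_k=\pred(Z,y_k)$ and update $M[\hat{b}_k]\leftarrow\max(M[\hat{b}_k],k)$; when the walk terminates, $M[i]$ holds the maximum SA value over the $i$-th PSA subarray. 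A final $O(r)$-time preprocessing step builds the constant-time RMQ structure of~\cite{DBLP:journals/siamcomp/FischerH11} on $M$.

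Summing the costs---$O(C(r,n)+\sigma+r)$ for the RLBWT structure, $O(r(1+Q(r,n))+C(r,n))$ for $Z$ and its index, $O(n(1+Q(r,n)))$ for the LF walk filling $M$, and $O(r)$ for the RMQ and the initial $M^{open}/M^{close}$ arrays---gives $O(n(1+Q(r,n))+C(r,n)+\sigma)$ total time; every intermediate structure fits in $O(r+\sigma)$ words, yielding the claimed working space.
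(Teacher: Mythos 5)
Your construction is essentially the paper's: both obtain $Z$ as the sorted set $\{\LF(p(i))\}_{i=1}^{r}$, fill $M$ by walking the LF cycle through all $n$ text positions and bucketing each suffix-array position into its PSA subarray via $\pred(Z,\cdot)$ in $O(n(1+Q(r,n)))$ total time, initialise the open/close arrays with the trivial sentinels in $O(r)$ time, and finish with the linear-time RMQ preprocessing of~\cite{DBLP:journals/siamcomp/FischerH11}. The one place you diverge is how $Z$ is produced in sorted order: the paper avoids a general sort by building, for each character $c$, the ``$c$-integer sequence'' of values $\LF(p(i))$ over the runs whose character is $c$ --- these come out already increasing because $\LF$ preserves relative order within a character class --- and then concatenating the sequences in character order, for a cost of $O(r(1+Q(r,n))+\sigma)$. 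You instead invoke a radix sort and assert it runs in $O(r)$ time because the keys are distinct integers in $[1,n]$; that specific claim is not justified, since sorting $r$ arbitrary (even distinct) integers from a universe of size $n$ in $O(r)$ time is not available on the word RAM. This does not sink the lemma: a radix sort with $\Theta(r)$ buckets runs in $O(r\log_{r}n)=O(n)$ time, which is absorbed by the $O(n(1+Q(r,n)))$ term of the stated bound, so your argument goes through once the sorting cost is accounted for honestly (or replaced by the paper's $c$-sequence trick). Everything else, including the key identity $\SA[\LF(y')]=\SA[y']-1$ driving the construction of $M$ without materialising $\SA$, matches the paper's proof.
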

\begin{proof}
See Appendix.
\end{proof}
\begin{theorem}\label{theo:main}
There exists a conversion algorithm from RLBWT to LZ77 in 
$O(n(1 + Q(r, n)) + C(r, n))$ time and $O(r)$ space.
\end{theorem}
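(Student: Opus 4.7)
The plan is to run Algorithm~\ref{algo:lzbwt} on a variant of the input whose alphabet is bounded by $r$, instantiated with the static data structures developed earlier in this section. First I would invoke Lemma~\ref{lem:effective2}(1)--(2) to build, in $O(r)$ time and working space, the shrunk RLBWT $L'$ and the inverse array $W$; from here on the effective alphabet size $\sigma'$ satisfies $\sigma' \le r$, so every additive $\sigma$ appearing in later bounds is absorbed into $O(r)$.

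Next I would prepare the two query engines on $L'$. Lemma~\ref{lem:rlbwtdata} yields, in $O(C(r,n)+r)$ time and $O(r)$ space, a data structure answering $\bsearch$, $\LF$, and $\access$ on $L'$ in $O(1+Q(r,n))$ time each, and Lemma~\ref{lem:caseBconstructtime} yields, in $O(n(1+Q(r,n))+C(r,n)+\sigma')=O(n(1+Q(r,n))+C(r,n))$ time and $O(r)$ working space, the RMTQ structure of Lemma~\ref{lem:caseBquerytime}, which answers a case~(B) query in $O(1+Q(r,n))$ time. Case~(A) of the RMTQ is handled in $O(1)$ using the cached previous answer, as prescribed by Lemma~1(1).

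I would then execute Algorithm~\ref{algo:lzbwt} on $L'$. Each iteration performs a constant number of $\access$, $\LF$, $\bsearch$, and $\any$ queries, each of cost $O(1+Q(r,n))$, and whenever the pointer $x$ on $T'$ decreases by one, the parameter $k$ of the RMTQ decreases by one as well, triggering a single call to the update routine of Lemma~\ref{lem:updatemk} at cost $O(1+Q(r,n))$. The loop runs $O(n)$ times in total (the only iterations in which $x$ does not decrease are those that emit a non-singleton phrase, of which there are at most $z \le n$), so the main loop contributes $O(n(1+Q(r,n)))$ work. Finally, the phrases of $T'^{R}$ are translated to those of $T^{R}$ in $O(1)$ per phrase using $W$ by Lemma~\ref{lem:effective2}(3)--(4). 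Summing all terms yields the claimed $O(n(1+Q(r,n))+C(r,n))$ running time and $O(r)$ working space.

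The one point that requires care is the synchronization of $M^{open}_{k}, M^{close}_{k}$ with the value of $k$ used in the next RMTQ call: the position of the current $k$ on the suffix array, which Lemma~\ref{lem:updatemk} needs as input, is precisely the variable $y$ that Algorithm~\ref{algo:lzbwt} already maintains through its $\LF$ steps, so this invariant is preserved automatically and a single update per decrement of $x$ suffices. Beyond that, the argument is simply a tally of $O(1+Q(r,n))$ work per iteration across $O(n)$ iterations plus the two construction terms, and no further technical obstacle remains.
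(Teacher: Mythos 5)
Your proposal is correct and follows essentially the same route as the paper: reduce to the shrunk RLBWT via Lemma~\ref{lem:effective2}, instantiate the static structures of Lemmas~\ref{lem:rlbwtdata} and~\ref{lem:caseBconstructtime}, charge $O(1+Q(r,n))$ per iteration of Algorithm~\ref{algo:lzbwt} (including the $M^{open}_{k}$/$M^{close}_{k}$ updates of Lemma~\ref{lem:updatemk}), and translate the phrases back with $W$. Your added remark on synchronizing the update position with the variable maintained by the $\LF$ steps is a detail the paper glosses over, but it does not change the argument.
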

\begin{proof}
We already have described our algorithm in Section~\ref{sec:analysis}. 
Each step of Algorithm~\ref{algo:lzbwt} additionally needs to update $M^{open}_{k}$, $M^{close}_{k}$ and determine either case (A) or (B) for a given query interval.
The total time is $O(1 + Q(r, n))$ using predecessor queries on $L$ and Lemma~\ref{lem:updatemk}.
Therefore, Theorem~\ref{theo:main} holds by Lemmas~\ref{lem:caseBquerytime},~\ref{lem:rlbwtdata},~\ref{lem:effective2}, and~\ref{lem:caseBconstructtime}. 
\end{proof}

\section{Conclusion}
We presented a new conversion algorithm from RLBWT to LZ77 in $O(n(1 + Q(r, n)) + C(r, n))$ time and $O(r)$ space. 
By leveraging the fastest static predecessor data structure using $O(r)$ space, 
we obtain the conversion algorithm that runs in $O(n \min \{ \log \log n, \sqrt{\frac{\log r}{\log\log r}} \})$ time.
This result improves the previous result in $O(n \log r)$ time and $O(r)$ space. 

We have the following open problem: 
can we archive the conversion from RLBWT to LZ77 in $O(n)$ time and $O(r)$ space? 
It is difficult to archive the $O(n)$ time complexity with our approach because 
any predecessor data structure for a set using $m^{O(1)}$ words of $(\log |U|)^{O(1)}$ bits requires $\Omega(\sqrt{\log m / \log \log m})$ query time in the worst case~\cite{DBLP:journals/jcss/BeameF02},
where $m$ is the number of elements in the set and $U$ is the universe of elements. 
Kempa's conversion algorithm can run faster than our algorithm, 
but it requires $\omega(r)$ working space in the worst case.
Thus, a new approach is required to solve this open problem.


\bibliography{ref}

\clearpage
\section*{Appendix A: The proof of Lemma~\ref{lem:rlbwtdata}}
We can compute $\LF$ and $\bsearch$ queries using $C$, rank, select, access queries for $L$ 
and construct $C$ by processing the RLBWT of $T$ in $O(r + \sigma)$ time and working space.
Therefore, we give the data structures for rank, select, and access queries for $L$ by the following lemmas. 
\begin{lemma}\label{lem:rlbwtdata1}
We can construct the data structure of $O(r + \sigma)$ space 
that supports $\rank$ queries for $L$ in $O(1+Q(r, n))$ time by processing the RLBWT of $T$ in
$O(C(r, n) + \sigma + r)$ time and $O(r + \sigma)$ working space. 
\end{lemma}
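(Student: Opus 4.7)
The plan is to build, for each character $c \in \Sigma$ that occurs in $L$, a separate static predecessor structure over the set of starting positions of the maximal $c$-runs in $L$, augmented so that each stored start position $p$ remembers the length $\ell$ of the $c$-run it begins together with the cumulative number $s$ of occurrences of $c$ in $L[1..p-1]$. A length-$\sigma$ array $A$ indexed by characters stores a pointer to the structure for each $c$ (or a null entry if $c$ does not occur), using $O(\sigma)$ words. Letting $r_c$ denote the number of $c$-runs, the per-character structures use $O(r_c)$ words, so the total space is $O(r + \sigma)$.

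To answer $\rank(L, c, i)$, first consult $A[c]$; if it is null, return $0$. Otherwise perform a single predecessor query for $i$ over the stored starting positions of $c$-runs, retrieving the largest such start $p \leq i$ together with its satellite pair $(\ell, s)$. If no such $p$ exists return $0$; otherwise return $s + \min(\ell,\ i - p + 1)$. The correctness is immediate: within the $c$-run beginning at $p$ the count of $c$ grows by one per position, and once $i$ passes the end of this run there are no further $c$'s before the next $c$-run start, which by the definition of $p$ lies strictly above $i$. Outside the predecessor call the work is $O(1)$, so the query time is $O(1 + Q(r_c, n)) = O(1 + Q(r, n))$ since $Q$ is monotone in its first argument.

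For construction I would scan the compressed RLBWT of $T$ from left to right in $O(r)$ time, using the length-$\sigma$ bucket array $A$ to group, for each character $c$, its run starts, run lengths, and the running partial sums of $c$'s count. Initializing $A$ costs $O(\sigma)$ time. Then for each $c$ I build the predecessor data structure on its $r_c$ run starts in $C(r_c, n)$ time, attaching the satellite fields $\ell$ and $s$ to each key. Summing gives $\sum_{c\in\Sigma} C(r_c, n) = O(C(r, n))$ for the predecessor structures cited in the paper (Beame--Fich~\cite{DBLP:journals/jcss/BeameF02} and the structure of~\cite{DBLP:conf/cpm/0001G15}), because in both cases the construction cost factors as $r_c$ times a quantity depending only on $n$. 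Total construction time is $O(C(r, n) + \sigma + r)$ and working space is $O(r + \sigma)$, as claimed.

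The one delicate point is the accounting $\sum_c C(r_c, n) = O(C(r, n))$: for a completely generic $C$ this need not hold, so the argument is not purely abstract but leans on the two concrete predecessor families the paper chooses, whose construction times are (super)additive in the input size. A second, minor, point is ensuring the chosen predecessor structure can return satellite data associated with the predecessor key in $O(1)$ extra time; both cited structures support this natively, so the bound $O(1 + Q(r, n))$ per query is preserved.
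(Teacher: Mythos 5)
Your proposal is correct and follows essentially the same route as the paper: per-character arrays of $c$-run starting positions with cumulative rank information attached, answered by one predecessor query plus $O(1)$ arithmetic (your single formula $s+\min(\ell,i-p+1)$ just merges the paper's two cases for $L[x]=c$ versus $L[x]\neq c$). Your explicit remark that $\sum_c C(r_c,n)=O(C(r,n))$ needs the concrete predecessor structures rather than an abstract $C$ is a point the paper leaves implicit, but it does not change the argument.
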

\begin{proof}
We compute rank queries for a character $c$ by the constant number of accessing elements on two arrays $B_{c}$ and $V_{c}$ and the constant number of predecessor queries on $B_{c}$.
Array $B_{c}$ is the array storing sorted starting positions of runs of $c$ in $L$,
and $V_{c}$ is the integer array such that $V_{c}[i]$ stores the rank of the first character of the $i$-th run of character $c$.
We can construct $B_{1}, \ldots, B_{\sigma}$ and $V_{1}, \ldots, V_{\sigma}$ in $O(r + \sigma)$ time and working space by processing the RLBWT of $T$ 
and predecessor data structures for rank queries in $O(1 + C(r, n))$ time. 
Therefore Lemma~\ref{lem:rlbwtdata1} holds. 

Formally, let $\run(c)$ be the number of runs of character $c$ in $L$~(i.e., $\run(c) = |\{ i \mid L_{i}[1]=c , i \in [1, r] \}|$). 
Array $B_{c}[i]$ stores the starting position of the $i$-th run of character $c$ for all $c \in \Sigma$ and $i \in [1, \run(c)]$, 
$V_{c}[i] = \rank(L, c, B_{c}[i])$ for $c \in \Sigma$ and $i \in [1, \run(c)]$ and $V_{c}[\run(c)+1] = \rank(L, c, n) + 1$. 
If $L[x] = c$ holds, then $\rank(L, c, x) = V_{c}[t] + x - B_{c}[t]$ holds; otherwise, $\rank(L, c, x) = V_{c}[t+1] - 1$ holds,
where $t = \pred(B_{c}, x)$.
Since $V_{c}[i+1] - V_{c}[i]$ represents the length of the $i$-th run of character $c$, 
we can compute $L[x] = c$ using predecessor queries, 
i.e., $x-B_{c}[t]+1 \leq \ell$ holds if and only if $L[x] = c$ holds, 
where $\ell = V_{c}[\pred(B_{c}, x)+1] - V_{c}[\pred(B_{c}, x)]$. 
\end{proof}
\begin{lemma}\label{lem:rlbwtdata2}
We can construct the data structure of $O(r + \sigma)$ space 
that supports $\select$ queries for $L$ in $O(1+Q(r, n))$ time by processing the RLBWT of $T$ in 
$O(C(r, n) + \sigma + r)$ time and $O(r + \sigma)$ working space. 
\end{lemma}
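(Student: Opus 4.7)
The plan is to adapt the construction from Lemma~\ref{lem:rlbwtdata1} by reusing the same per-character arrays $B_{c}$ and $V_{c}$, but now building the predecessor structure on $V_{c}$ rather than on $B_{c}$. First I would construct, for every $c \in \Sigma$, the array $B_{c}$ of sorted starting positions of the runs of $c$ in $L$ together with the array $V_{c}$ defined by $V_{c}[i] = \rank(L, c, B_{c}[i])$ for $i \in [1, \run(c)]$ and $V_{c}[\run(c)+1] = \rank(L, c, n)+1$. A single left-to-right scan of the compressed RLBWT of $T$ produces all of these arrays in $O(r + \sigma)$ time and working space, exactly as in the rank case.

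The key observation is that the $i$-th run of $c$ in $L$ contains precisely the occurrences of $c$ whose ranks lie in the interval $[V_{c}[i], V_{c}[i+1]-1]$, since $V_{c}[i+1] - V_{c}[i]$ equals the length of that run. Hence, to answer $\select(L, c, j)$ it suffices to locate the run of $c$ that contains the $j$-th occurrence of $c$ and then return its offset within that run. Concretely, I would compute $t = \pred(V_{c}, j)$ and return $B_{c}[t] + (j - V_{c}[t])$ when $j \leq V_{c}[\run(c)+1]-1$, and return $n+1$ otherwise. This uses one predecessor query and a constant number of array accesses, giving $O(1 + Q(r, n))$ query time since $Q(\run(c), n) \leq Q(r, n)$.

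The main thing to check is that the total construction cost of the $\sigma$ predecessor data structures fits within the stated bound. Because $\sum_{c \in \Sigma} \run(c) = r$, and for both candidate predecessor structures cited in Section~\ref{sec:analysis} (the Beame--Fich structure and the $O(\log \log n)$-time structure) $C(m, n)$ is at worst near-linear in $m$ and satisfies $\sum_{c} C(\run(c), n) = O(C(r, n) + \sigma)$, the total construction time is $O(C(r, n) + \sigma + r)$. The additive $\sigma$ covers the initialization of empty structures for characters that do not occur in $L$. The total space is $O(r + \sigma)$ because $\sum_{c}(|B_{c}| + |V_{c}|) = O(r)$ plus the $O(\sigma)$ pointers indexed by the character. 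Combined with the construction of $B_{c}$ and $V_{c}$, this matches the claim of the lemma.
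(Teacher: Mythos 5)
Your proposal is correct and follows essentially the same route as the paper: both build $B_{c}$ and $V_{c}$ per character, put the predecessor structure on $V_{c}$, and answer $\select(L,c,j)$ as $B_{c}[t] + j - V_{c}[t]$ with $t = \pred(V_{c}, j)$; your existence test via $V_{c}[\run(c)+1]-1$ is equivalent to the paper's test via $C[c+1]-C[c]$. You additionally spell out why the per-character construction costs sum to $O(C(r,n)+\sigma+r)$, a detail the paper leaves implicit.
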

\begin{proof}
We compute select queries for a character $c$ using three arrays $C$, $V_{c}$, and $B_{c}$ and the predecessor on $V_{c}$.
When $\select(L, c, x) \neq n+1$ holds, 
$\select(L, c, x) = B_{c}[\pred(V_{c}, x)] + x - V_{c}[\pred(V_{c}, x)]$ holds. 
Since $C[c+1]-C[c]$ represents the number of $c$s in $L$,
we can compute $\select(L, c, x) \neq n+1$ using $C$. 
We can construct $B_{1}, \ldots, B_{\sigma}$ and $V_{1}, \ldots, V_{\sigma}$ in $O(r + \sigma)$ time and working space by processing the RLBWT of $T$ 
and predecessor data structures for select queries in $O(1 + C(r, n))$ time. 
Therefore Lemma~\ref{lem:rlbwtdata2} holds. 
\end{proof}
\begin{lemma}\label{lem:rlbwtdata3}
We can construct the data structure of $O(r)$ space 
that supports access queries for $L$ in $O(1+Q(r, n))$ time by processing the RLBWT of $T$ in 
$O(C(r, n) + r)$ time and $O(r)$ working space. 
\end{lemma}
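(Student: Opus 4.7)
The plan is to exploit a feature not shared with $\rank$ and $\select$: answering $\access(L, x)$ only requires locating the unique run that contains position $x$ and then reading off its character, without any per-character indexing. Consequently, a single predecessor structure on the set of run boundaries will suffice, and the additive $\sigma$ term that appeared in Lemmas~\ref{lem:rlbwtdata1} and~\ref{lem:rlbwtdata2} disappears.

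Concretely, I would perform one left-to-right pass over $\rle(L)$ to build two arrays of length $r$: an array $B$ with $B[i] = p(i)$ (the starting position of the $i$-th run in $L$) and an array $D$ with $D[i] = L_{i}[1]$ (the character of the $i$-th run). Prefix-summing the run lengths makes this pass cost $O(r)$ time and $O(r)$ working space. Since $B$ is an increasing sequence of $r$ integers drawn from the universe $[1, n]$, the predecessor data structure assumed in the statement of the lemma can be built on $B$ in $C(r, n)$ time and $O(r)$ space, yielding a total construction cost of $O(C(r, n) + r)$ and total working space $O(r)$.

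An $\access(L, x)$ query is then answered by computing $j = \pred(B, x)$ in $Q(r, n)$ time and returning $D[j]$ in constant time. Correctness follows from the definition of $p(\cdot)$: the inequalities $p(j) \leq x < p(j+1)$ hold exactly when $x$ lies inside the $j$-th run of $L$, and by the definition of run-length encoding the character of that run is $L_{j}[1]$, which equals $L[x]$. I do not foresee any real technical obstacle here, since the construction reduces to a single linear scan of the compressed RLBWT together with a black-box invocation of the predecessor data structure; the only substantive observation is the conceptual one that, for $\access$, one predecessor structure over all run boundaries replaces the per-character predecessor structures needed for rank and select, which is precisely why the $\sigma$ term is absent from both the space and the construction bounds.
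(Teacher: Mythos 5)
Your proof is correct and matches the paper's argument essentially verbatim: the paper also builds the array $B = p(1), \ldots, p(r)$ of run starting positions and an array of first characters of runs (called $E$ there, $D$ in your write-up), constructs a predecessor structure on $B$, and answers $\access(L, x)$ as $E[\pred(B, x)]$. Your added remark about why the $\sigma$ term disappears relative to rank and select is a correct and useful observation, though the paper does not state it explicitly.
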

\begin{proof}
We compute access queries using two arrays $B$ and $E$ and the predecessor on $B$. 
Array $B$ is the sorted starting positions of runs in $L$~(i.e., $B = p(1), p(2), \ldots, p(r)$), 
and $E$ is the first characters of runs in $L$~(i.e., $E = \rle(L)[1][1], \rle(L)[2][1], \ldots, \rle(L)[r][1]$). 
We can construct $B$ and $E$ in $O(r)$ time by processing the RLBWT. 
Since we can compute $L[i]$ by $E[\pred(B, i)]$ for a given integer $i$,  
Lemma~\ref{lem:rlbwtdata3} holds. 
\end{proof}

Therefore Lemma~\ref{lem:rlbwtdata} holds by Lemmas~\ref{lem:rlbwtdata1},~\ref{lem:rlbwtdata2}, and~\ref{lem:rlbwtdata3}.

\section*{Appendix B: The proof of Lemma~\ref{lem:caseBconstructtime}}
\begin{proof}
Our data structure for RMTQ consists of $M, M_{n}^{open}, M_{n}^{close}, Z$, 
the RMQ data structure for $M$, and the predecessor data structure for $Z$.
We construct $Z$ using $c$-\emph{integer sequence}. 
The $c$-integer sequence is the subarray of $Z$ such that subarray $Z[b..e]$
is on the run of a character $c$ in $F$, i.e., 
$b = \min \{ X[u] \mid L[p(u)] = c, u \in [1, r] \}$ and $e = \max \{ X[u] \mid L[p(u)] = c, u \in [1, r] \}$.
We construct $1$-integer sequence, $\ldots$, $\sigma$-integer sequence in $O(r (1+Q(r,n)) + \sigma)$ 
using the LF function since $\LF(p(i)) < \LF(p(j))$ holds for two positions $i$ and $j$
such that $L[i] = L[j]$ and $i < j$ hold. 
We obtain $Z$ by concatenating the sequences.
The total time is $O(r (1+Q(r,n)) + \sigma)$ and the working space is $O(r + \sigma)$. 

We construct $M$ using the $\LF$ function and predecessor on $Z$ in $O(1 + Q(r, n))$ time 
since $\LF(i)$ represents a position on the suffix array of $T$ for $i \in [1, n]$. 
We construct $M^{open}_{n}$ and $M^{close}_{n}$ in $O(r)$ time since 
$M^{open}_{n} = (-1, n+1), \ldots, (-1, n+1)$ and $M^{close}_{n} = (-1, 0), \ldots, (-1, 0)$. 
We construct the RMQ data structure for $M$ in $O(r)$ time and working space using~\cite{DBLP:journals/siamcomp/FischerH11}. 
Therefore Lemma~\ref{lem:caseBconstructtime} holds since $r \leq n$.
\end{proof}

\end{document}